\providecommand{\U}[1]{\protect\rule{.1in}{.1in}}
\newtheorem{theorem}{Theorem}
\newtheorem{corollary}[theorem]{Corollary}
\newtheorem{lemma}[theorem]{Lemma}
\newenvironment{proof}[1][Proof]{\noindent\textbf{#1.} }{\ \rule{0.5em}{0.5em}}
\begin{document}

\author{Vadim E. Levit and David Tankus\\Department of Computer Science and Mathematics\\Ariel University, ISRAEL\\\{levitv, davidta\}@ariel.ac.il}
\title{Weighted Well-Covered Claw-Free Graphs}
\date{}
\maketitle

\begin{abstract}
A graph $G$ is \textit{well-covered} if all its maximal independent sets are
of the same cardinality. Assume that a weight function $w$ is defined on its
vertices. Then $G$ is $w$\textit{-well-covered} if all maximal independent
sets are of the same weight. For every graph $G$, the set of weight functions
$w$ such that $G$ is $w$-well-covered is a \textit{vector space}. Given an
input claw-free graph $G$, we present an $O\left(  n^{6}\right)  $ algortihm,
whose input is a claw-free graph $G$, and output is the vector space of weight
functions $w$, for which $G$ is $w$-well-covered.

A graph $G$ is \textit{equimatchable} if all its maximal matchings are of the
same cardinality. Assume that a weight function $w$ is defined on the edges of
$G$. Then $G$ is $w$\textit{-equimatchable} if all its maximal matchings are
of the same weight. For every graph $G$, the set of weight functions $w$ such
that $G$ is $w$-equimatchable is a vector space. We present an $O\left(
m\cdot n^{4}+n^{5}\log n\right)  $ algorithm which receives an input graph
$G$, and outputs the vector space of weight functions $w$ such that $G$ is $w$-equimatchable.

\end{abstract}

\section{Introduction}

\subsection{Basic Definitions and Notation}

Throughout this paper $G = (V,E)$ is a simple (i.e., a finite, undirected,
loopless and without multiple edges) graph with vertex set $V = V (G)$ and
edge set $E = E(G)$.

Cycles of $k$ vertices are denoted by $C_{k}$, and paths of $k$ vertices are
denoted by $P_{k}$. When we say that $G$ contains a $C_{k}$ or a $P_{k}$ for
some $k\geq3$, we mean that $G$ admits a subgraph isomorphic to $C_{k}$ or to
$P_{k}$, respectively. It is important to mention that these subgraphs are not
necessarily induced.

Let $u$ and $v$ be two vertices in $G$. The \textit{distance} between $u$ and
$v$, denoted $d(u,v)$, is the length of a shortest path between $u$ and $v$,
where the length of a path is the number of its edges. If $S$ is a non-empty
set of vertices, then the \textit{distance} between $u$ and $S$, denoted
$d(u,S)$, is defined by
\[
d(u,S)=\min\{d(u,s):s\in S\}.
\]
For every positive integer $i$, denote
\[
N_{i}(S)=\{x\in V:d(x,S)=i\},
\]
and
\[
N_{i}\left[  S\right]  =\{x\in V:d(x,S)\leq i\}.
\]

We abbreviate $N_{1}(S)$ and $N_{1}\left[  S\right]  $ to be $N(S)$ and
$N\left[  S\right]  $, respectively. If $S$ contains a single vertex, $v$,
then we abbreviate
\[
N_{i}(\{v\}),N_{i}\left[  \{v\}\right]  ,N(\{v\}),\text{ and }N\left[
\{v\}\right]
\]
to be
\[
N_{i}(v),N_{i}\left[  v\right]  ,N(v),\text{ and }N\left[  v\right]  ,
\]
respectively. We denote by $G[S]$ the subgraph of $G$ induced by $S$. For
every two sets, $S$ and $T$, of vertices of $G$, we say that $S$
\textit{dominates} $T$ if $T\subseteq N\left[  S\right]  $.

\subsection{Well-Covered Graphs}

Let $G=(V,E)$ be a graph. A set of vertices $S$ is \textit{independent} if its
elements are pairwise nonadjacent. An independent set of vertices is
\textit{maximal} if it is not a subset of another independent set. An
independent set of vertices is \textit{maximum} if the graph does not contain
an independent set of a higher cardinality.

The graph $G=(V,E)$ is \textit{well-covered} if every maximal independent set
is maximum. Assume that a weight function $w:V\longrightarrow\mathbb{R}$ is
defined on the vertices of $G$. For every set $S\subseteq V$, define
\[
w(S)=\sum_{s\in S}w(s).
\]
Then $G$ is $w$\textit{-well-covered} if all maximal independent sets of $G$
are of the same weight.

The problem of finding a maximum independent set in an input graph is
\textbf{NP-}complete. However, if the input is restricted to well-covered
graphs, then a maximum independent set can be found polynomially using the
\textit{greedy algorithm}. Similarly, if a weight function $w:V\longrightarrow
\mathbb{R}$ is defined on the vertices of $G$, and $G$ is $w$-well-covered,
then finding a maximum weight independent set is a polynomial problem.

The recognition of well-covered graphs is known to be \textbf{co-NP}-complete.
This was proved independently in \cite{cs:note} and \cite{sknryn:compwc}. In
\cite{cst:structures} it is proven that the problem remains \textbf{co-NP}%
-complete even when the input is restricted to $K_{1,4}$-free graphs. However,
the problem is polynomially solvable for $K_{1,3}$-free graphs
\cite{tata:wck13f,tata:wck13fn}, for graphs with girth at least $5$
\cite{fhn:wcg5}, for graphs with a bounded maximal degree \cite{cer:degree},
for chordal graphs \cite{ptv:chordal}, for bipartite graphs
\cite{favaron:verywell,plummer:survey,ravindra:well-covered}, and for graphs
without cycles of length $4$ and $5$ \cite{fhn:wc45}. It should be emphasized
that the forbidden cycles are not necessarily induced.

For every graph $G$, the set of weight functions $w$ for which $G$ is
$w$-well-covered is a \textit{vector space} \cite{cer:degree}. That vector
space is denoted $WCW(G)$ \cite{bnz:wcc4}. 

Clearly, $w\in WCW(G)$ if and only if $G$ is $w$-well-covered. Since
recognizing well-covered graphs is \textbf{co-NP}-complete, finding the vector
space $WCW(G)$ of an input graph $G$ is \textbf{co-NP}-hard. In
\cite{lt:wwc456} there is a polynomial algorithm which receives as its input a
graph $G$ without cycles of lengths $4$, $5$, and $6$, and outputs $WCW(G)$.

This article presents a polynomial algorithm whose input is a $K_{1,3}$-free
graph $G$, and the output is $WCW(G)$. Thus we generalize
\cite{tata:wck13f,tata:wck13fn}, which supply a polynomial time algorithm for
recognizing well-covered $K_{1,3}$-free graphs.

\subsection{Generating Subgraphs and Relating Edges}

We use the following notion, which has been introduced in \cite{lt:wc4567}.
Let $B$ be an induced complete bipartite subgraph of $G$ on vertex sets of
bipartition $B_{X}$ and $B_{Y}$. Assume that there exists an independent set
$S$ such that each of $S\cup B_{X}$ and $S\cup B_{Y}$ is a maximal independent
set of $G$. Then $B$ is a \textit{generating} subgraph of $G$, and it
\textit{produces} the restriction: $w(B_{X})=w(B_{Y})$. Every weight function
$w$ such that $G$ is $w$-well-covered must \textit{satisfy} the restriction
$w(B_{X})=w(B_{Y})$. The set $S$ is a \textit{witness} that $B$ is generating.
In the restricted case that the generating subgraph $B$ is isomorphic to
$K_{1,1}$, call its vertices $x$ and $y$. In that case $xy$ is a
\textit{relating} edge, and $w(x)=w(y)$ for every weight function $w$ such
that $G$ is $w$-well-covered.

The decision problem whether an edge in an input graph is relating is
\textbf{NP-}complete \cite{bnz:wcc4}. Therefore, recognizing generating
subgraphs is \textbf{NP-}complete as well. In \cite{lt:relatedc4} it is proved
that recognizing relating edges and generating subgraphs is \textbf{NP-}%
complete even in graphs without cycles of lengths $4$ and $5$. However,
recognizing relating edges can be done polynomially if the input is restricted
to graphs without cycles of lengths $4$ and $6$ \cite{lt:relating}, and
recognizing generating subgraphs is a polynomial problem when the input is
restricted to graphs without cycles of lengths $4$, $6$ and $7$
\cite{lt:wc4567}.

Generating subgraphs play an important roll in finding the vector space
$WCW(G)$. In this article we use generating subgraphs in the algorithm which
receives as its input a $K_{1,3}$-free graph $G$, and outputs $WCW(G)$.

\subsection{Equimatchable Graphs}

Let $G=(V,E)$ be a graph. The \textit{line graph} of $G$, denoted $L(G)$ is a
graph such that every vertex of $L(G)$ represents an edge in $G$, and two
vertices of $L(G)$ are adjacent if and only if they represent two edges in $G$
with a common endpoint.

Every independent set of vertices in $L(G)$ defines a set of pairwise
non-adjacent edges in $G$. A set of pairwise non-adjacent edges is called a
\textit{matching}. A matching $M$ \textit{dominates} a set $S$ of vertices if
every vertex of $S$ is an endpoint of an edge of $M$. A matching in a graph is
\textit{maximal} if it is not contained in another matching. 

The \textit{size} of a matching $M$, denoted $|M|$, is the number of its
edges. A matching $M$ is \textit{maximum} if the graph does not admit a
matching with size bigger than $|M|$. 

A graph is called \textit{equimatchable} if all its maximal matchings are
maximum. Clearly, $G$ is equimatchable if and only if $L(G)$ is well-covered.

Line graphs are characterized by a list of forbidden induced subgraphs
\cite{lpp:equimatchable}. One of these subgraphs is $K_{1,3}$, called a
\textit{claw}. 

Hence, every line graph is claw-free. Thus the existence of a polynomial
algorithm for recognizing well-covered claw-free graphs
\cite{tata:wck13f,tata:wck13fn}, implies a polynomial algorithm for
recognizing equimatchable graphs.

Assume that a weight function $w:E\longrightarrow\mathbb{R}$ is defined on the
edges of $G$. For every set $S\subseteq E$, define
\[
w(S)=\sum_{s\in S}w(s).
\]
Then $G$ is $w$-\textit{equimatchable} if all its maximal matchings are of the
same weight.

It is easy to see that for every graph $G$, the set of weight functions $w$
such that $G$ is $w$-equimatchable is a vector space. We denote that vector
space by $EVS(G)$.

In this paper we present a polynomial algorithm whose input is a graph $G$,
and the output is the vector space $EVS(G)$.

\section{Weighted Hereditary Systems}

A \textit{hereditary system} is a pair $H=(S,\Psi)$, where $S$ is a finite set
and $\Psi$ is a family of subsets of $S$, where $f\in\Psi$ and $f^{\prime
}\subseteq f$ implies $f^{\prime}\in\Psi$. The members of $\Psi$ are called
the \textit{feasible} sets of the system. 

A feasible set is \textit{maximal} if it is not contained in another feasible
set. A feasible set is \textit{maximum} if the hereditary system does not
admit a feasible set with higher cardinality.

A hereditary system is \textit{greedy} if and only if its maximal feasible
sets are all of the same cardinality. Equivalently, a greedy hereditary system
is a hereditary system for which the \textit{greedy algorithm} for finding a
maximal feasible set always produces a maximum cardinality feasible set.

Assume that a weight function $w:S\longrightarrow\mathbb{R}$ is defined on the
elements of a hereditary system. The hereditary system is \textit{greedy} if
and only if all its maximal feasible sets are of the same weight, and
equivalently, the greedy algorithm for finding a maximal feasible set always
produces a feasible set of maximum weight.

An example of the above is a hereditary system $H=(S,\Psi)$, where $S=V$ is
the set of vertices of a given graph $G=(V,E)$, and $\Psi$ is the family of
all independent sets of $G$. Clearly, the hereditary system $H=(V,\Psi)$ is
greedy if and only if $G$ is well-covered. Similarly, if a weight function
$w:V\longrightarrow\mathbb{R}$ is defined, then the hereditary system
$H=(V,\Psi)$ is greedy if and only if $G$ is $w$-well-covered.

Another example of a hereditary system is a pair $H=(S,\Psi)$, where $S=E$ is
the set of edges of a graph $G=(V,E)$, and $\Psi$ is the family of its
matchings. Clearly, the hereditary system $H=(E,\Psi)$ is greedy if and only
if the graph is equimatchable. Similarly, if a weight function
$w:E\longrightarrow\mathbb{R}$ is defined, then the hereditary system
$H=(E,\Psi)$ is greedy if and only if $G$ is $w$-equimatchable.

\begin{theorem}
\cite{tata:hamilton} \label{hs} Let $H=(S,\Psi)$ be a hereditary system. Then
$H$ is not greedy if and only if there exist two maximal feasible sets,
$F_{1}$ and $F_{2}$, of $S$ with different cardinalities, \ $\left\vert
F_{1}\right\vert \not =\left\vert F_{2}\right\vert $, \ such that for each
$f_{1}\in F_{1}\setminus F_{2}$, and for each $f_{2}\in F_{2}\setminus F_{1}$,
the set
\[
(F_{1}\cap F_{2})\cup\{f_{1},f_{2}\}
\]
is not feasible.
\end{theorem}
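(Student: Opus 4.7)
The plan is to handle the two directions separately, with the nontrivial content lying in the forward direction. The backward direction is immediate: if two maximal feasible sets $F_{1}$ and $F_{2}$ of different cardinalities exist at all, then by definition $H$ has maximal feasible sets of distinct sizes and hence is not greedy. The extra condition on $(F_{1}\cap F_{2})\cup\{f_{1},f_{2}\}$ plays no role in this direction.

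For the forward direction, assume $H$ is not greedy, so the family $\mathcal{P}$ of pairs $(F_{1},F_{2})$ of maximal feasible sets with $|F_{1}|\neq|F_{2}|$ is nonempty. I would choose a pair in $\mathcal{P}$ that maximizes the intersection size $|F_{1}\cap F_{2}|$; such a maximum exists because $S$ is finite. The goal is to show that this particular pair already satisfies the stated non-feasibility property.

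Suppose, towards a contradiction, that there are $f_{1}\in F_{1}\setminus F_{2}$ and $f_{2}\in F_{2}\setminus F_{1}$ with $F'=(F_{1}\cap F_{2})\cup\{f_{1},f_{2}\}$ feasible. Using the hereditary property, extend $F'$ to some maximal feasible set $F_{3}$. Then $F_{1}\cap F_{3}\supseteq(F_{1}\cap F_{2})\cup\{f_{1}\}$ and $F_{2}\cap F_{3}\supseteq(F_{1}\cap F_{2})\cup\{f_{2}\}$, so both $|F_{1}\cap F_{3}|$ and $|F_{2}\cap F_{3}|$ strictly exceed $|F_{1}\cap F_{2}|$. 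Since $|F_{1}|\neq|F_{2}|$, at least one of the inequalities $|F_{3}|\neq|F_{1}|$, $|F_{3}|\neq|F_{2}|$ must hold, so at least one of the pairs $(F_{1},F_{3})$, $(F_{2},F_{3})$ lies in $\mathcal{P}$ and has a strictly larger intersection than $(F_{1},F_{2})$. This contradicts the maximality of $|F_{1}\cap F_{2}|$ and completes the proof.

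The step I expect to require the most care is precisely the case analysis at the end, making sure that even when $|F_{3}|$ happens to equal one of $|F_{1}|,|F_{2}|$ we still extract the contradiction from the other pair; the rest is routine manipulation of the hereditary axiom and maximality.
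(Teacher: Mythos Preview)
Your proposal is correct. The paper does not give its own proof of this theorem (it is quoted from \cite{tata:hamilton}), but it proves the weighted generalization (Theorem~\ref{whs}) by exactly the same ``choose a bad pair with maximum intersection and derive a contradiction'' argument; your explicit extension of $(F_{1}\cap F_{2})\cup\{f_{1},f_{2}\}$ to a maximal feasible set $F_{3}$ and the case split on which of $|F_{1}|,|F_{2}|$ differs from $|F_{3}|$ are precisely the steps that the paper's proof of Theorem~\ref{whs} leaves implicit.
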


The following is a generalization of Theorem \ref{hs}.

\begin{theorem}
\label{whs} Let
\[
(H=(S,\Psi),w:S\longrightarrow\mathbb{R})
\]
be a hereditary system with a weight function defined on its elements. Then
$(H,w)$ is not greedy if and only if there exist two maximal feasible sets,
$F_{1}$ and $F_{2}$, of $S$ with different weights, $w(F_{1})\not =w(F_{2})$,
such that for each $f_{1}\in F_{1}\setminus F_{2}$, and for each $f_{2}\in
F_{2}\setminus F_{1}$, the set
\[
(F_{1}\cap F_{2})\cup\{f_{1},f_{2}\}
\]
is not feasible.
\end{theorem}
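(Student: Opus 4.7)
My plan is to follow the extremal/exchange strategy of Theorem~\ref{hs}, but adapted to the weighted setting. The \emph{if} direction is immediate: any two maximal feasible sets $F_{1},F_{2}$ with $w(F_{1})\neq w(F_{2})$ already witness that $(H,w)$ is not greedy, independently of the exchange-failure condition.

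For the \emph{only if} direction, I would assume $(H,w)$ is not greedy, so some two maximal feasible sets have distinct weights. Among all pairs $(F_{1},F_{2})$ of maximal feasible sets with $w(F_{1})\neq w(F_{2})$, I would select one that \emph{maximizes} $|F_{1}\cap F_{2}|$; such a pair exists because $S$ is finite. The claim is that any such extremal pair automatically satisfies the exchange-failure condition in the statement, which is what we need.

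To verify the claim, I would argue by contradiction: suppose there exist $f_{1}\in F_{1}\setminus F_{2}$ and $f_{2}\in F_{2}\setminus F_{1}$ for which $(F_{1}\cap F_{2})\cup\{f_{1},f_{2}\}$ is feasible, and extend this set to a maximal feasible set $F_{3}$. Since $w(F_{1})\neq w(F_{2})$, we must have $w(F_{3})\neq w(F_{i})$ for some $i\in\{1,2\}$. Taking $i=1$ without loss of generality, observe that $F_{1}\cap F_{3}\supseteq (F_{1}\cap F_{2})\cup\{f_{1}\}$, while $f_{1}\notin F_{1}\cap F_{2}$ because $f_{1}\notin F_{2}$. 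Hence $|F_{1}\cap F_{3}|\geq|F_{1}\cap F_{2}|+1$, so the pair $(F_{1},F_{3})$ contradicts the maximality of $|F_{1}\cap F_{2}|$. The case $i=2$ is symmetric, using $f_{2}$ in place of $f_{1}$.

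The main subtlety is choosing the right extremal quantity. In the unweighted Theorem~\ref{hs} one can freely use cardinality comparisons because the hypothesis is already $|F_{1}|\neq|F_{2}|$, but $w(F_{1})\neq w(F_{2})$ is perfectly compatible with $|F_{1}|=|F_{2}|$ and places no constraint on $|F_{3}|$ either. Maximizing $|F_{1}\cap F_{2}|$ sidesteps this issue, since the exchange step enlarges the common part by at least one element while preserving the weight inequality on some pair, and no control on the individual sizes of the feasible sets is required.
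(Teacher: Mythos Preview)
Your proposal is correct and follows essentially the same extremal argument as the paper: choose a pair $(F_{1},F_{2})$ of maximal feasible sets with different weights maximizing $|F_{1}\cap F_{2}|$, then derive a contradiction from any successful exchange. The only minor difference is that you explicitly extend $(F_{1}\cap F_{2})\cup\{f_{1},f_{2}\}$ to a maximal feasible set before invoking the extremality hypothesis, which is in fact more careful than the paper's presentation (the paper applies the extremality condition directly to $F_{3}=(F_{1}\cap F_{2})\cup\{f_{1},f_{2}\}$ without first extending it to a maximal set).
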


\begin{proof}
Clearly, if there exist two maximal feasible sets with different weights then
the hereditary system is not greedy.

Suppose $(H,w)$ is not greedy. There exist two maximal feasible sets, $F_{1}$
and $F_{2}$, of $S$ with the following two properties:

\begin{enumerate}
\item $w(F_{1}) \not = w(F_{2})$.

\item For every two maximal feasible sets, $F_{1}^{\prime}$ and $F_{2}%
^{\prime}$, of $S$, if $w(F_{1}^{\prime})\not =w(F_{2}^{\prime})$ then
\[
\left\vert F_{1}\cap F_{2}\right\vert \geq|F_{1}^{\prime}\cap F_{2}^{\prime}|.
\]

\end{enumerate}

Assume \ on \ the \ contrary \ that \ there \ \ exist \ \ $f_{1}\in
F_{1}\setminus F_{2}$, \ \ and \ \ $f_{2}\in F_{2}\setminus F_{1}$, \ \ such
\ \ that \ \ the \ set \
\[
F_{3}=(F_{1}\cap F_{2})\cup\{f_{1},f_{2}\}
\]
\ is \ feasible. \ Clearly, \
\[
\min\{|F_{1}\cap F_{3}|,|F_{2}\cap F_{3}|\}>|F_{1}\cap F_{2}|.
\]
\ \ Therefore, \ \ $w(F_{1})=w(F_{3})$ \ \ and \ \ $w(F_{2})=w(F_{3})$. Hence,
$w(F_{1})=w(F_{2})$, which is a contradiction.

We proved that for every $f_{1}\in F_{1}\setminus F_{2}$, and for every
$f_{2}\in F_{2}\setminus F_{1}$, the set
\[
(F_{1}\cap F_{2})\cup\{f_{1},f_{2}\}
\]
is not feasible.
\end{proof}

\section{$w$-Well-Covered Claw-Free Graphs}

The following is an instance of Theorem \ref{whs}:

\begin{theorem}
\label{whswc} Let
\[
(G=(V,E),w:V\longrightarrow\mathbb{R})
\]
be a graph with a weight function defined on its vertices. Then $G$ is not
$w$-well-covered if and only if there exist two maximal independent sets,
$S_{1}$ and $S_{2}$, with different weights, $w(S_{1})\not =w(S_{2})$, such
that the subgraph induced by $S_{1}\bigtriangleup S_{2}$ is complete bipartite.
\end{theorem}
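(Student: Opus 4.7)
The plan is to derive this theorem as a direct specialization of Theorem \ref{whs} to the hereditary system $H=(V,\Psi)$ in which $\Psi$ is the family of independent sets of $G$. As observed in Section~2, the maximal feasible sets of $H$ are precisely the maximal independent sets of $G$, and $(H,w)$ is greedy if and only if $G$ is $w$-well-covered, so Theorem \ref{whs} applies verbatim.

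The easy direction ($\Leftarrow$) is immediate: the mere existence of two maximal independent sets $S_1,S_2$ with $w(S_1)\neq w(S_2)$ already certifies that $G$ is not $w$-well-covered, regardless of the structure of $S_{1}\triangle S_{2}$. For the converse ($\Rightarrow$), suppose $G$ is not $w$-well-covered. Apply Theorem~\ref{whs} to obtain maximal independent sets $S_1,S_2$ with $w(S_1)\neq w(S_2)$ such that, for every $v_1\in S_1\setminus S_2$ and every $v_2\in S_2\setminus S_1$, the set
\[
T(v_1,v_2)=(S_1\cap S_2)\cup\{v_1,v_2\}
\]
fails to be independent.

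The key observation, which is where the proof really takes place, is that this non-independence condition can be reduced to a simple statement about the edge $v_1v_2$. Indeed, $(S_1\cap S_2)\cup\{v_1\}\subseteq S_1$ and $(S_1\cap S_2)\cup\{v_2\}\subseteq S_2$ are both already independent, so the only candidate edge inside $T(v_1,v_2)$ is $v_1v_2$ itself. Therefore the conclusion of Theorem~\ref{whs} forces $v_1v_2\in E$ for every such pair. This shows that every vertex of $S_1\setminus S_2$ is adjacent to every vertex of $S_2\setminus S_1$.

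To finish, I would verify the remaining structural pieces. Each of $S_1\setminus S_2$ and $S_2\setminus S_1$ is independent, being a subset of an independent set; combined with the cross-edges just obtained, $G[S_1\triangle S_2]$ is complete bipartite with bipartition $(S_1\setminus S_2,\, S_2\setminus S_1)$. Finally, one must rule out the degenerate case where one part of this bipartition is empty: if $S_1\setminus S_2=\emptyset$ then $S_1\subseteq S_2$, which by maximality of $S_1$ gives $S_1=S_2$ and contradicts $w(S_1)\neq w(S_2)$; symmetrically for $S_2\setminus S_1$. I do not anticipate any serious obstacle, since the argument is essentially a translation of Theorem~\ref{whs} into graph-theoretic language; the only subtlety is the reduction of the non-feasibility condition to the single edge $v_1v_2$, which is what makes the symmetric difference complete bipartite rather than merely ``obstructed'' in some weaker sense.
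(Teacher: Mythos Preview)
Your proposal is correct and follows exactly the approach the paper intends: the paper states Theorem~\ref{whswc} as ``an instance of Theorem~\ref{whs}'' without further proof, and you have simply spelled out that specialization, including the key observation that non-feasibility of $(S_1\cap S_2)\cup\{v_1,v_2\}$ forces $v_1v_2\in E$ since both $(S_1\cap S_2)\cup\{v_1\}$ and $(S_1\cap S_2)\cup\{v_2\}$ are already independent. Your additional checks that both sides of the bipartition are nonempty are appropriate and not stated explicitly in the paper.
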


\begin{corollary}
\label{generating} Let $G=(V,E)$ be a graph, and let $B$ be an induced
complete bipartite subgraph of $G$ on vertex sets of bipartition $B_{X}$ and
$B_{Y}$. Then the following two conditions are equivalent:

\begin{enumerate}
\item There exist two maximal independent sets, $S_{1}$ and $S_{2}$, of $G$
such that $B_{X} = S_{1} \setminus S_{2}$ and $B_{Y} = S_{2} \setminus S_{1}$.

\item $B$ is generating.
\end{enumerate}
\end{corollary}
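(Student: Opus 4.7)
The plan is to deduce both implications directly from the definition of a generating subgraph, via the natural correspondence between a witness $S$ and the intersection $S_{1} \cap S_{2}$.

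For the direction (2) $\Rightarrow$ (1), I would take the witness $S$ guaranteed by the hypothesis that $B$ is generating, and set $S_{1} := S \cup B_{X}$ and $S_{2} := S \cup B_{Y}$; both are maximal independent by the definition of witness. The required identities $S_{1} \setminus S_{2} = B_{X}$ and $S_{2} \setminus S_{1} = B_{Y}$ then reduce to pairwise disjointness of $B_{X}$, $B_{Y}$, and $S$. Disjointness of $B_{X}$ and $B_{Y}$ is immediate from the bipartition of $B$. The disjointness $S \cap B_{X} = \emptyset$ is forced by bipartite \emph{completeness}: any $v \in S \cap B_{X}$ would be adjacent to every vertex of $B_{Y}$, contradicting independence of $S \cup B_{Y}$. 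Symmetrically, $S \cap B_{Y} = \emptyset$.

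For the direction (1) $\Rightarrow$ (2), I would set $S := S_{1} \cap S_{2}$. This $S$ is independent as a subset of $S_{1}$, and the set identity
\[
S \cup B_{X} = (S_{1} \cap S_{2}) \cup (S_{1} \setminus S_{2}) = S_{1}
\]
shows $S \cup B_{X}$ is maximal independent; the analogous computation gives $S \cup B_{Y} = S_{2}$, also maximal independent. Hence $S$ witnesses that $B$ is generating.

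I expect no real obstacle, as the corollary is essentially a restatement of the definition of a generating subgraph. The only substantive observation is that the completeness of $B$ (not merely bipartiteness) is what forces the witness to be disjoint from $B_{X} \cup B_{Y}$; everything else is routine set-theoretic manipulation.
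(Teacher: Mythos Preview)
Your proof is correct and follows exactly the same approach as the paper: for (1)$\Rightarrow$(2) take $S=S_{1}\cap S_{2}$ as witness, and for (2)$\Rightarrow$(1) take $S_{1}=S\cup B_{X}$, $S_{2}=S\cup B_{Y}$. You in fact supply more detail than the paper does, spelling out the disjointness of $S$ from $B_{X}\cup B_{Y}$ via completeness of $B$, which the paper leaves implicit.
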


\begin{proof}
If the first condition holds then $S_{1} \cap S_{2}$ is a witness that $B$ is generating.

If $B$ is generating, let $S$ be a witness of $B$. The first condition holds
for $S_{1} = S \cup B_{X}$ and $S_{2} = S \cup B_{Y}$.
\end{proof}

The main result of this section is the following.

\begin{theorem}
\label{wcwclaw} There exists an $O(|V|^{6})$ algorithm, which receives as its
input a claw-free graph $G$, and finds $WCW(G)$.
\end{theorem}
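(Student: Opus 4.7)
The plan is to use Theorem \ref{whswc} together with Corollary \ref{generating} to identify $WCW(G)$ with the set of weight functions $w$ satisfying $w(B_{X})=w(B_{Y})$ for every generating subgraph $B$ of $G$ with bipartition $(B_{X},B_{Y})$. Computing $WCW(G)$ thus reduces to enumerating candidate complete bipartite subgraphs, deciding for each whether it actually generates, and solving the resulting homogeneous linear system over the weight variables $w(v)$, $v\in V$.

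The structural fact that makes the problem tractable is that in a claw-free graph every generating subgraph is of bounded size. Indeed, if $B$ is an induced complete bipartite subgraph with $|B_{Y}|\geq 3$, then any $x\in B_{X}$ together with three distinct vertices of $B_{Y}$ induces a $K_{1,3}$, contradicting claw-freeness. Hence $|B_{X}|,|B_{Y}|\leq 2$, so every candidate $B$ is isomorphic to $K_{1,1}$, $K_{1,2}$, or $K_{2,2}$, and the full list of candidates can be enumerated in $O(|V|^{4})$ time.

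For each candidate $B$, Corollary \ref{generating} turns the test ``is $B$ generating?'' into the following: does there exist an independent set $S\subseteq V\setminus N[B]$ which is maximal in $G[V\setminus N[B]]$ and which additionally dominates every vertex of $V\setminus N[B_{X}]$ (so that $S\cup B_{X}$ is maximal in $G$) and every vertex of $V\setminus N[B_{Y}]$ (so that $S\cup B_{Y}$ is maximal)? The induced subgraph $G[V\setminus N[B]]$ is itself claw-free. The technical heart of the proof is showing that this constrained maximal-independent-set question can be settled in $O(|V|^{2})$ time per candidate by adapting the structural machinery that underlies the polynomial recognition of well-covered claw-free graphs in \cite{tata:wck13f,tata:wck13fn}; the bounded size of $B$ means the extra domination requirement concerns only $O(|V|)$ specified vertices in a very localized neighborhood of $B$.

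Once the tests are finished we accumulate at most $O(|V|^{4})$ linear equations in $|V|$ unknowns and extract a basis by row-reducing the associated matrix in $O(|V|^{6})$ time, which gives $WCW(G)$ as its null space. Added to the $O(|V|^{4})$ enumeration and the $O(|V|^{4})\cdot O(|V|^{2})=O(|V|^{6})$ spent on the generating tests, the total is $O(|V|^{6})$. The principal obstacle I expect to wrestle with is the per-candidate test: a naive reduction to a maximum independent set computation in the claw-free subgraph $G[V\setminus N[B]]$ is too costly, so one has to combine the bounded size of $B$, the local structure of $N[B]$, and the characterization of maximal independent sets in claw-free graphs to stay within the $O(|V|^{2})$ budget.
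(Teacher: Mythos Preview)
Your overall framework---enumerate bounded-size induced complete bipartite subgraphs, test each for being generating, and collect the linear restrictions---is exactly the paper's approach, and your justification that $|B_X|,|B_Y|\le 2$ is correct. The genuine gap is the per-candidate test, which you yourself flag as unresolved.

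Two concrete points separate your plan from a complete proof. First, you overlook that when $B\cong K_{2,2}$, claw-freeness forces every vertex outside $B$ adjacent to $B_X$ to be adjacent to $B_Y$ as well (otherwise one gets a claw centered at a vertex of $B_X$), so the set of ``problem'' vertices that a witness must dominate is empty and $B$ is automatically generating. This cuts the number of candidates needing a nontrivial test down to the $O(|V|^{3})$ copies of $K_{1,1}$ and $K_{1,2}$, and relaxes your per-candidate budget from $O(|V|^{2})$ to $O(|V|^{3})$.

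Second, the test itself is not carried out by the kind of local structural analysis you anticipate, and in fact the paper \emph{does} reduce it to a maximum-weight independent set computation in a claw-free graph---precisely the move you call ``too costly''. The key idea you are missing is this: set $M_{1}=(N(B_{X})\cap N_{2}(B_{Y}))\cup(N_{2}(B_{X})\cap N(B_{Y}))$ and $M_{2}=(N_{2}(B_{X})\cap N_{3}(B_{Y}))\cup(N_{3}(B_{X})\cap N_{2}(B_{Y}))$; then $B$ is generating iff some independent $S\subseteq M_{2}$ dominates $M_{1}$. Claw-freeness guarantees that no vertex of $M_{1}$ has two non-adjacent neighbours in $M_{2}$, so if one weights each $x\in M_{2}$ by $|N(x)\cap M_{1}|$, the weight of any independent $S\subseteq M_{2}$ equals $|N(S)\cap M_{1}|$. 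Hence a maximum-weight independent set in $G[M_{2}]$ (computable in $O(|V|^{3})$ by \cite{faenza2011}) has weight $|M_{1}|$ iff $B$ is generating. With $O(|V|^{3})$ nontrivial candidates this yields the claimed $O(|V|^{6})$ bound without any row reduction step.
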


\begin{proof}
Let $G =(V, E)$ be a graph. The following algorithm finds $WCW(G)$.

\begin{enumerate}
\item For every induced complete bipartite subgraph $B$ of $G$

\begin{enumerate}
\item Denote its vertex sets of bipartition $B_{X}$ and $B_{Y}$.

\item Decide whether $B$ is generating.

\item If $B$ is generating

\begin{enumerate}
\item List the restriction $w(B_{X})=w(B_{Y})$.
\end{enumerate}
\end{enumerate}

\item $w \in WCW(G)$ if and only if $w$ satisfies all listed restrictions.
\end{enumerate}

In the general case, this algorithm is not polynomial, because the number of
induced complete bipartite subgraphs is not polynomial, and the time needed to
decide whether one induced subgraph is generating is not polynomial. However,
we show that the algorithm can be implemented polynomially if the input graph
is claw-free.

Suppose $G$ is claw-free. Then every induced complete bipartite subgraph is
isomorphic to one of the following graphs: $K_{1,1}$, $K_{1,2}$, and $K_{2,2}%
$. Hence, the number of subgraphs the algorithm checks is polynomial. It
remains to prove that it is possible to decide polynomially for a single
subgraph whether it is generating.

Let $B$ be an induced complete bipartite subgraph of $G$ on vertex sets of
bipartition $B_{X}$ and $B_{Y}$. Define
\[
M_{1}=(N(B_{X})\cap N_{2}(B_{Y}))\cup(N_{2}(B_{X})\cap N(B_{Y})),
\]
and
\[
M_{2}=(N_{2}(B_{X})\cap N_{3}(B_{Y}))\cup(N_{3}(B_{X})\cap N_{2}(B_{Y})).
\]
Clearly, $B$ is generating if and only if there exists an independent set in
$M_{2}$ which dominates $M_{1}$.

If $B = K_{2,2}$ then the fact that the graph is claw-free implies that
$M_{1}$ and $M_{2}$ are empty sets. Hence, $M_{2}$ dominates $M_{1}$, and $B$
is generating.

Assume $B\not =K_{2,2}$. In order to decide whether $B$ is generating, define
a weight function
\[
w:M_{2}\longrightarrow\mathbb{R}\text{ by }w(x)=|N(x)\cap M_{1}|,
\]
i.e. the weight of every vertex in $M_{2}$ is the number of vertices it
dominates in $M_{1}$. The fact that the graph is claw-free implies that a
vertex of $M_{1}$ can not be dominated by two non-adjacent vertices of $M_{2}%
$. Therefore, if $S\subseteq M_{2}$ is independent then
\[
w(S)=\sum_{s\in S}w(s)=\sum_{s\in S}\left\vert N(s)\cap M_{1}\right\vert
=|N(S)\cap M_{1}|,
\]
i.e., the weight of $S$ is the number of vertices it dominates in $M_{1}$.

The next step is to invoke an algorithm finding the maximum weighted
independent set in claw-free graphs. First such algorithm is due to Minty
\cite{minty:wclaw}, while the best known one with the complexity $O(|V|^{3})$
may be found in \cite{faenza2011}. Let $S^{\ast}$ be a maximum weight
independent set of $G[M_{2}]$. Clearly, $w(S^{\ast})\leq|M_{1}|$. If
$w(S^{\ast})=|M_{1}|$ then $S^{\ast}$ dominates $M_{1}$, and $B$ is
generating. Otherwise, there does not exist an independent set of $M_{2}$
which dominates $M_{1}$, and $B$ is not generating.

The number of induced complete bipartite subgraphs which are isomorphic to
$K_{1,1}$ or $K_{1,2}$ is $O(|V|^{3})$. Hence, the complexity of the algorithm
is $O(|V|^{6})$.
\end{proof}

\section{$w$-Equimatchable Graphs}

Let $G=(V,E)$ be a graph and $w:E\longrightarrow\mathbb{R}$ a weight function
defined on its vertices. Since there is a 1-1 mapping between the edges of $G$
and the vertices of $L(G)$, the function $w$ can be viewed as a weight
function on the vertices of $L(G)$. Therefore, $G$ is $w$-equimatchable if and
only if $L(G)$ is $w$-well-covered. Hence, $EVS(G)=WCW(L(G))$. Obviously,
$EVS(G)$ can be found polynomially by constructing the line-graph, $L(G)$, and
then applying the algorithm of the proof of Theorem \ref{wcwclaw} to find
$WCW(L(G))$. The number of vertices in $L(G)$ is $|E|$. Hence, $WCW(L(G))$ can
be found in $|E|^{6}$ time. However, the main result of this section is an
algorithm which finds $EVS(G)$ in $O\left(  \left\vert E\right\vert
\cdot\left\vert V\right\vert ^{4}+\left\vert V\right\vert ^{5}\log\left\vert
V\right\vert \right)  $ time.

The following is an instance of Theorem \ref{whs}.

\begin{theorem}
\label{whseq} Let
\[
(G=(V,E),w:E\longrightarrow\mathbb{R})
\]
be a graph with a weight function defined on its edges. Then $G$ is not
$w$-equimatchable if and only if there exist two maximal matchings, $M_{1}$
and $M_{2}$, with different weights, $w(M_{1})\not =w(M_{2})$, such that
$M_{1}\bigtriangleup M_{2}$ is one of the following.

\begin{enumerate}
\item Two adjacent edges, $v_{1}v_{2} \in M_{1}$ and $v_{2}v_{3} \in M_{2}$.

\item Three edges, $\{v_{1}v_{2}, v_{3}v_{4}\} \subseteq M_{1}$ and
$v_{2}v_{3} \in M_{2}$.

\item Four edges, $\{v_{1}v_{2}, v_{3}v_{4}\} \subseteq M_{1}$, and
$\{v_{2}v_{3}, v_{1}v_{4}\} \subseteq M_{2}$.
\end{enumerate}
\end{theorem}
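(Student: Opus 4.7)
The plan is to apply Theorem \ref{whs} to the hereditary system $(E,\Psi)$ where $\Psi$ is the family of matchings of $G$, as already indicated in the paragraph preceding the statement. The easy direction is immediate: if maximal matchings $M_1,M_2$ exist with $w(M_1)\neq w(M_2)$ and $M_1\bigtriangleup M_2$ of any of the three listed shapes, then by definition $G$ is not $w$-equimatchable. For the hard direction, Theorem \ref{whs} produces maximal matchings $M_1,M_2$ with $w(M_1)\neq w(M_2)$ such that for every $e_1\in M_1\setminus M_2$ and every $e_2\in M_2\setminus M_1$, the set $(M_1\cap M_2)\cup\{e_1,e_2\}$ is not a matching.

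The first reduction I would make is to observe that $(M_1\cap M_2)\cup\{e_1\}\subseteq M_1$ and $(M_1\cap M_2)\cup\{e_2\}\subseteq M_2$ are both matchings, so the only way to destroy the matching property by adding both $e_1$ and $e_2$ is that $e_1$ and $e_2$ share an endpoint. Thus the condition from Theorem \ref{whs} translates to: \emph{every} edge of $M_1\setminus M_2$ shares a vertex with \emph{every} edge of $M_2\setminus M_1$.

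Next I would invoke the classical structural fact that $M_1\bigtriangleup M_2$ is a disjoint union of alternating paths and even alternating cycles (since in $M_1\bigtriangleup M_2$ every vertex is incident to at most one $M_1$-edge and one $M_2$-edge). I would first rule out single-edge components: if a component consisted of a lone edge $e\in M_1\setminus M_2$, both endpoints of $e$ would be unsaturated by $M_2$, contradicting maximality of $M_2$. Hence every component contains edges of both types, so if there were two or more components I could pick $e_1$ and $e_2$ of opposite types from two different components; they would be vertex-disjoint, violating the sharing requirement. Therefore $M_1\bigtriangleup M_2$ is a single alternating path or a single alternating even cycle.

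The remaining case analysis is short. For a single alternating path $v_0 v_1 \cdots v_k$, if $k\geq 4$ then the opposite-type edges $v_0 v_1$ and $v_3 v_4$ share no vertex, a contradiction, so $k\in\{2,3\}$, giving items 1 and 2. For a single alternating even cycle of length $2k\geq 6$, the opposite-type edges $v_0 v_1$ and $v_3 v_4$ again share no vertex, so $2k=4$, giving item 3. The only step requiring care is the maximality argument ruling out one-edge components; the component-count step and the length bounds on paths/cycles are routine consequences of the vertex-sharing constraint.
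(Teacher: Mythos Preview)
Your argument is correct. The paper itself does not give an explicit proof of Theorem~\ref{whseq}; it simply declares it to be ``an instance of Theorem~\ref{whs}'' and moves on. Your write-up supplies exactly the details one needs to turn that declaration into a proof: translating the infeasibility of $(M_1\cap M_2)\cup\{e_1,e_2\}$ into ``$e_1$ and $e_2$ share a vertex,'' using maximality to exclude one-edge components, and then bounding the single alternating path or even cycle.

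One minor remark on an alternative route implicit in the paper: since $G$ is $w$-equimatchable iff $L(G)$ is $w$-well-covered, Theorem~\ref{whseq} also follows from Theorem~\ref{whswc} applied to $L(G)$, together with the observation (used in the proof of Theorem~\ref{wcwclaw}) that in a claw-free graph every induced complete bipartite subgraph is $K_{1,1}$, $K_{1,2}$, or $K_{2,2}$; these three cases correspond precisely to your items 1, 2, 3. Your direct matching-theoretic argument and this line-graph argument are equivalent reformulations of the same reasoning.
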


We need the following three lemmas to prove the main result of this section.

\begin{lemma}
\label{path} The following problem can be solved in $O\left(  \left\vert
E\right\vert \cdot\left\vert V\right\vert +\left\vert V\right\vert ^{2}%
\log\left\vert V\right\vert \right)  $ time.\newline\textit{Instance:} A graph
$G=(V,E)$ and a path $P=(v_{1}v_{2},v_{2}v_{3},...,v_{k-1}v_{k})$ in $G$ for
some $k\geq3$.\newline\textit{Question:} Do there exist two maximal matchings,
$M_{1}$ and $M_{2}$, of $G$ such that $P=M_{1}\bigtriangleup M_{2}$?
\end{lemma}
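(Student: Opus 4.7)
The plan is to reduce the question to a matching problem on $G[U]$, where $U = V \setminus V(P)$, and then invoke a weighted matching algorithm.

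First, I would use $P = M_1 \bigtriangleup M_2$ to deduce that the edges of $P$ alternate between $M_1 \setminus M_2$ and $M_2 \setminus M_1$ along the path. Without loss of generality, the odd-indexed path edges $v_{2i-1}v_{2i}$ lie in $M_1$ and the even-indexed edges $v_{2i}v_{2i+1}$ lie in $M_2$. Let $M := M_1 \cap M_2$. Each interior vertex $v_i$ ($2 \le i \le k-1$) is saturated by a $P$-edge in both $M_1$ and $M_2$, and $v_1$ (respectively $v_k$) is saturated by a $P$-edge in at least one of them; hence no edge of $M$ can touch a vertex of $V(P)$, so $M$ is a matching of $G[U]$.

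Next I would determine the maximality conditions by splitting on the parity of $k$: when $k$ is odd, $v_k$ is the only vertex of $V(P)$ unsaturated by $M_1$ and $v_1$ is the only one unsaturated by $M_2$; when $k$ is even, $M_1$ saturates all of $V(P)$ while both $v_1$ and $v_k$ are unsaturated by $M_2$. Inspecting the edges of $E$ that are not in $M_1 \cup M_2$, one finds that $M_1$ and $M_2$ are both maximal in $G$ if and only if
\begin{enumerate}
\item $M$ is a maximal matching of $G[U]$ (otherwise an edge inside $U$ could enlarge both matchings);
\item $D := (N(v_1) \cup N(v_k)) \cap U \subseteq V(M)$ (otherwise an edge from $v_1$ or $v_k$ to an $M$-unsaturated neighbour in $U$ could be added);
\item if $k$ is even then $v_1 v_k \notin E$ (otherwise that chord could be added to $M_2$).
\end{enumerate}
Because any matching of $G[U]$ saturating $D$ can be greedily extended to a maximal matching of $G[U]$ without losing $D$-saturation, the question collapses to: does $G[U]$ admit a matching saturating $D$?

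This last problem is solved by weighted matching: assign each edge of $G[U]$ a weight equal to the number of its endpoints lying in $D$, compute a maximum weight matching, and check whether the total weight equals $|D|$. Using Gabow's weighted matching algorithm, this runs in $O(|V| \cdot |E| + |V|^{2} \log |V|)$ time, which dominates the linear-time preprocessing of building $U$ and $D$, testing the chord $v_1 v_k$, and extending a saturating matching to a maximal one. The main obstacle is pinning down the maximality conditions correctly — in particular, the parity-dependent role of the endpoints $v_1$ and $v_k$ and the exclusion of the chord $v_1 v_k$ in the even case — after which the reduction to weighted matching is a standard subroutine.
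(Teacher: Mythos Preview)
Your proposal is correct and follows essentially the same approach as the paper: remove $V(P)$, set $D=(N(v_1)\cup N(v_k))\cap U$, rule out the even-$k$ chord $v_1v_k$, and reduce to a maximum-weight matching in $G[U]$ with edge weights counting endpoints in $D$, invoking Gabow's algorithm for the stated bound. The only difference is expository: you spell out the parity analysis and the three maximality conditions explicitly, whereas the paper asserts the reduction more tersely.
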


\begin{proof}
If $k$ is even and $v_{1}v_{k} \in E$ then the instance is obviously negative.
Hence, we assume that $k$ is odd or $v_{1}v_{k} \not \in E$.

Define
\[
V^{\prime}=V\setminus\{v_{1},...,v_{k}\}\text{ and }D=N(\{v_{1},v_{k}\})\cap
V^{\prime}.
\]
Let $G^{\prime}$ be the induced subgraph of $G$ on vertex set $V^{\prime}$,
and denote the set of its edges by $E^{\prime}$.

Define a weight function $w:E^{\prime}\longrightarrow\mathbb{R}$ by:
\[
\forall xy\in E^{\prime}\ \ w(xy)=|\{x,y\}\cap D|.
\]
For every matching $M$ in $G^{\prime}$, its weight, $w(M)$, equals to the
number of vertices of $D$ which are dominated by $M$. We now invoke the
algorithm of \cite{gabow:wmatching} for finding a maximum weight matching in a
graph, and denote the output of the algorithm by $M^{\ast}$. Clearly,
$w(M^{\ast})\leq|D|$.

Suppose $w(M^{\ast})=|D|$. Then $M^{\ast}$ dominates $D$. Let $M^{\ast\ast}$
be any maximal matching in $G^{\prime}$ which contains $M^{\ast}$, and define
\begin{align*}
M_{1} &  =M^{\ast\ast}\cup\{v_{2i-1}v_{2i}:1\leq2i\leq k\},\\
M_{2} &  =M^{\ast\ast}\cup\{v_{2i}v_{2i+1}:3\leq2i+1\leq k\}.
\end{align*}
Obviously, $M_{1}$ and $M_{2}$ are two maximal matchings of $G$ and
$P=M_{1}\bigtriangleup M_{2}$.

On the other hand, suppose $w(M^{\ast})<\left\vert D\right\vert $. There does
not exist a maximal matching of $G^{\prime}$ which dominates $D$, and
therefore the instance at hand is negative.

The complexity of the algorithm of \cite{gabow:wmatching} is $O\left(
\left\vert E\right\vert \cdot\left\vert V\right\vert +\left\vert V\right\vert
^{2}\log\left\vert V\right\vert \right)  $. This is also the complexity of
this algorithm.
\end{proof}

\begin{lemma}
\label{cycle} Let $G=(V,E)$ be a graph, and let $C=(v_{1}v_{2},v_{2}%
v_{3},...,v_{k-1}v_{k},v_{k}v_{1})$ be an even cycle in $G$, for some $k\geq
4$. Then there exist two maximal matchings, $M_{1}$ and $M_{2}$, of $G$ such
that $C=M_{1}\bigtriangleup M_{2}$.
\end{lemma}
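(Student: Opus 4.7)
The plan is to give an explicit construction of $M_1$ and $M_2$. Since $k$ is even, the cycle $C$ decomposes into two edge-disjoint perfect matchings of itself,
\[
A=\{v_{2i-1}v_{2i}:1\le i\le k/2\}, \qquad B=\{v_{2i}v_{2i+1}:1\le i\le k/2\}
\]
(indices taken modulo $k$, so that $v_k v_1 \in B$). Then $A\cap B=\emptyset$ and $A\cup B=C$, which immediately gives $A\triangle B=C$.

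Next I would extend these two matchings of $C$ by a single common matching drawn from the rest of the graph. Let $V'=V\setminus\{v_1,\ldots,v_k\}$, let $G'$ be the subgraph of $G$ induced by $V'$, and let $M'$ be any maximal matching of $G'$ (which exists by a greedy argument). Define
\[
M_1 = A\cup M', \qquad M_2 = B\cup M'.
\]
Because edges of $A$ and $B$ have both endpoints in $\{v_1,\ldots,v_k\}$ while edges of $M'$ have both endpoints in $V'$, the sets $M_1$ and $M_2$ are in fact matchings of $G$. Their symmetric difference is $M_1\triangle M_2=A\triangle B=C$, as required.

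It remains to verify maximality. Consider any edge $e=uv\in E\setminus M_1$. If $u\in\{v_1,\ldots,v_k\}$ or $v\in\{v_1,\ldots,v_k\}$, then that endpoint is already saturated by $A$, so $e$ cannot be added to $M_1$. Otherwise both endpoints lie in $V'$, so $e\in E(G')$; maximality of $M'$ in $G'$ then guarantees that $e$ meets $M'$. Hence $M_1$ is maximal, and the symmetric argument shows $M_2$ is maximal.

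There is no real obstacle here: the only thing one has to notice is that the parity of $k$ lets $C$ split into two matchings, and that the two halves can be padded by the \emph{same} matching $M'$ of $G-V(C)$ so that this common part cancels in the symmetric difference.
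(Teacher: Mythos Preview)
Your proof is correct and follows essentially the same construction as the paper: split the even cycle into its two perfect matchings, extend both by a common maximal matching of $G[V\setminus\{v_1,\ldots,v_k\}]$, and observe that the common part cancels in the symmetric difference. Your version is in fact slightly more detailed than the paper's, since you explicitly verify that $M_1$ and $M_2$ are matchings and justify their maximality.
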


\begin{proof}
Let $M$ be any maximal matching in $G[V\setminus\{v_{1},...,v_{k}\}]$. Define
\[
M_{1}=M\cup\{v_{2i-1}v_{2i}:1\leq i\leq\frac{k}{2}\}
\]
and
\[
M_{2}=M\cup\{v_{2i}v_{2i+1}:1\leq i\leq\frac{k}{2}-1\}\cup\{v_{k}v_{1}\}.
\]
Obviously, $M_{1}$ and $M_{2}$ are two maximal matchings of $G$ and
$C=M_{1}\bigtriangleup M_{2}$.
\end{proof}

The naive algorithm for finding $EVS(G)$ checks all structures described in
Theorem \ref{whseq}, i.e., all paths of lengths $2$ and $3$, and cycles of
length $4$. For each of these structures, the algorithm decides whether it is
the symmetric difference of two maximal matchings. If so, an appropriate
equation is added to the list of restrictions. A weight function
$w:E\longrightarrow\mathbb{R}$ satisfies all the restrictions found by the
algorithm if and only if $w\in EVS(G)$. For each path of lengths $2$ or $3$,
the naive algorithm invokes the algorithm of Lemma \ref{path}. By Lemma
\ref{cycle}, every cycle of length $4$ is the symmetric difference of two
maximal matchings. Hence, the total complexity of the naive algorithm is
\[
O\left(  \left\vert E\right\vert \cdot\left\vert V\right\vert ^{5}+\left\vert
V\right\vert ^{6}\log\left\vert V\right\vert \right)  .
\]
However, we present a more efficient algorithm.

\begin{lemma}
\label{allendpointsp4} The following problem can be solved in $O\left(
\left\vert E\right\vert \cdot\left\vert V\right\vert ^{2}+\left\vert
V\right\vert ^{3}\log\left\vert V\right\vert \right)  $ time:\newline Input: A
graph $G=(V,E)$, and two non-adjacent vertices, $v_{1}$ and $v_{4}$, in
$G$.\newline Output: All paths $P=(v_{1}v_{2},v_{2}v_{3},v_{3}v_{4})$, such
that there exist two maximal matchings, $M_{1}$ and $M_{2}$, in $G$, and
$M_{1}\bigtriangleup M_{2}=P$?
\end{lemma}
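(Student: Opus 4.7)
My plan is to reformulate the question using Lemma~\ref{path}, then process the $O(|V|)$ possible choices of $v_2$ in an outer loop, performing a single maximum-weight matching computation per iteration to identify all valid $v_3$'s.

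Writing $D_0 := N(\{v_1,v_4\}) \setminus \{v_1,v_4\}$, Lemma~\ref{path} applied with $k=4$ shows that $(v_1 v_2, v_2 v_3, v_3 v_4)$ arises as $M_1 \bigtriangleup M_2$ iff $G[V \setminus \{v_1,v_2,v_3,v_4\}]$ admits a matching dominating $D_0 \setminus \{v_2,v_3\}$; equivalently, since $v_2, v_3 \in D_0$, iff $G - \{v_1, v_4\}$ has a matching containing the edge $v_2 v_3$ that saturates $D_0$.

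For each fixed $v_2 \in N(v_1) \setminus \{v_4\}$, construct the auxiliary graph $H^{(v_2)}$ from $G - \{v_1, v_2, v_4\}$ by (i) assigning edge weights $w(xy) := |\{x,y\} \cap (D_0 \setminus \{v_2\})|$ and (ii) adjoining a new vertex $z$ joined by weight-$1$ edges to every candidate $v_3 \in C_{v_2} := (N(v_2) \cap N(v_4)) \setminus \{v_1, v_2, v_4\}$. A direct counting argument shows every matching of $H^{(v_2)}$ has weight at most $|D_0|-1$, and any matching of weight $|D_0|-1$ that contains the edge $z v_3$ restricts (after deletion of $zv_3$) to a matching of $G - \{v_1, v_2, v_3, v_4\}$ dominating $D_0 \setminus \{v_2, v_3\}$. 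Thus $(v_2, v_3)$ is valid iff some maximum-weight matching of $H^{(v_2)}$ attains weight $|D_0|-1$ and contains the edge $zv_3$.

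Run Gabow's algorithm~\cite{gabow:wmatching} on $H^{(v_2)}$ in $O(|E|\cdot|V| + |V|^2\log|V|)$ time to obtain such a matching together with its dual (odd-set) certificate. If the maximum weight is strictly less than $|D_0|-1$ then no $v_3$ is valid for this $v_2$; otherwise, the primal--dual data allows the identification, in $O(|V|+|E|)$ additional time, of all edges $zv_3$ participating in some maximum-weight matching, via an alternating-path traversal rooted at $z$ that checks tightness and extendability. Each such $v_3$ yields a valid path to output. Summing over the $O(|V|)$ choices of $v_2$ gives total time $O(|E|\cdot|V|^2 + |V|^3 \log|V|)$, as required. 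The chief technical obstacle is the post-processing step---extracting all edges at $z$ that lie in some maximum-weight matching in linear time---which relies on the weighted analogue of the Gallai--Edmonds decomposition implicit in Gabow's output.
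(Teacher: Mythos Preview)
Your reduction is sound: the path $(v_1v_2,v_2v_3,v_3v_4)$ is a symmetric difference of two maximal matchings iff $G-\{v_1,v_4\}$ has a matching containing $v_2v_3$ that saturates $D_0$, and your gadget with the extra vertex $z$ correctly encodes, for fixed $v_2$, the question of which $v_3$ work. The gap is precisely the step you flag as the ``chief technical obstacle''. The assertion that Gabow's primal--dual output lets you list, in $O(|V|+|E|)$ time, all edges $zv_3$ lying in \emph{some} maximum-weight matching is not a standard result and you give no argument for it. In the unweighted case the Gallai--Edmonds decomposition handles this, but in the weighted setting the set of optimal matchings is governed by the tight-edge subgraph together with the nested family of blossoms carrying positive dual value, and ``an alternating-path traversal rooted at $z$ that checks tightness and extendability'' is not a well-defined procedure without substantial further work on how blossom constraints interact with such a traversal. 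Since the whole complexity bound rests on this step---without it, testing each $v_3$ separately costs an extra factor of $|V|$---the proof as written does not establish the lemma.

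The paper sidesteps this issue entirely with a different device. It works directly in $G'=G-\{v_1,v_4\}$ and uses a single lexicographic weight function: $w(xy)=2+\epsilon$ if $xy$ has one end in $N(v_1)$ and the other in $N(v_4)$, and $w(xy)=|\{x,y\}\cap N(\{v_1,v_4\})|$ otherwise, with $\epsilon=1/|V|$. A maximum-weight matching then simultaneously maximizes the number $A$ of saturated vertices of $N(\{v_1,v_4\})$ and, subject to that, the number $B$ of candidate middle edges $v_2v_3$ it contains. Each such edge in the current optimum is output as a valid path; its weight is reset to $2$ and the matching recomputed. This needs only repeated calls to a black-box weighted-matching routine, with no post-hoc analysis of the space of optimal solutions. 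If you want to salvage your outer-loop-over-$v_2$ strategy, the cleanest fix is to adopt the same $\epsilon$-perturbation idea inside each $H^{(v_2)}$ (give the $z$-edges weight $1+\epsilon$ and iterate), rather than relying on an unproved structural claim about weighted optima.
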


\begin{proof}
Let $G^{\prime}=(V^{\prime},E^{\prime})$ be the subgraph of $G$ induced by
$V^{\prime}=V\setminus\{v_{1},v_{4}\}$, and let $\epsilon=\frac{1}{|V|}$.
Define $w:E^{\prime}\longrightarrow\mathbb{R}$ as follows:
\[
w(xy)=\left\{
\begin{array}
[c]{ll}%
2+\epsilon & \quad\text{if $x\in N(v_{1})$ and $y\in N(v_{4})$ }\\
|\{x,y\}\cap N(\{v_{1},v_{4}\})| & \quad\text{otherwise }%
\end{array}
\right.
\]
(see Figure \ref{fw}).

\begin{figure}[h]
\setlength{\unitlength}{1.0cm} \begin{picture}(20,5)\thicklines
\put(4,3){\circle*{0.2}}
\put(4.5,3){\makebox(0,0){$v_{1}$}}
\put(5.5,4){\circle*{0.2}}
\put(5.5,2){\circle*{0.2}}
\put(7,4){\circle*{0.2}}
\put(7,2){\circle*{0.2}}
\put(8.5,3){\circle*{0.2}}
\put(8,3){\makebox(0,0){$v_{4}$}}
\put(4,3){\line(3,2){1.5}}
\put(4,3){\line(3,-2){1.5}}
\put(5.5,4){\line(1,0){1.5}}
\put(5.5,4){\line(4,-5){1.5}}
\put(5.5,2){\line(1,0){1.5}}
\put(6.2,4.2){\makebox(0,0){$2+\epsilon$}}
\put(6.2,3.2){\makebox(0,0){$2+\epsilon$}}
\put(6.2,2.2){\makebox(0,0){$2+\epsilon$}}
\put(7,2){\line(3,2){1.5}}
\put(7,4){\line(3,-2){1.5}}
\put(4,5){\circle*{0.2}}
\put(2.5,4){\circle*{0.2}}
\put(2.5,3){\circle*{0.2}}
\put(2.5,2){\circle*{0.2}}
\put(4,1){\circle*{0.2}}
\put(4,3){\line(0,1){2}}
\put(4,3){\line(-3,2){1.5}}
\put(4,3){\line(-1,0){1.5}}
\put(4,3){\line(-3,-2){1.5}}
\put(4,3){\line(0,-1){2}}
\multiput(1,1)(0,2){3}{\circle*{0.2}}
\put(1,1){\line(0,1){2}}
\put(1.2,2){\makebox(0,0){$0$}}
\put(1,1){\line(1,0){3}}
\put(2.5,1.2){\makebox(0,0){$1$}}
\put(2.5,5.2){\makebox(0,0){$1$}}
\put(1,3){\line(3,2){1.5}}
\put(2.5,3){\line(0,1){1}}
\put(1,3){\line(3,-2){3}}
\put(1,3){\line(1,0){1.5}}
\put(2,2.6){\makebox(0,0){$1$}}
\put(2,3.2){\makebox(0,0){$1$}}
\put(2,4){\makebox(0,0){$1$}}
\put(3.2,1.8){\makebox(0,0){$2$}}
\put(1,5){\line(1,0){3}}
\put(2.7,3.4){\makebox(0,0){$2$}}
\put(0,3){\circle*{0.2}}
\put(0,3){\line(1,2){1}}
\put(0,3){\line(1,-2){1}}
\put(0.5,4.3){\makebox(0,0){$0$}}
\put(0.5,2.3){\makebox(0,0){$0$}}
\multiput(10,1)(0,2){3}{\circle*{0.2}}
\put(8.5,3){\line(3,4){1.5}}
\put(8.5,3){\line(1,0){1.5}}
\put(8.5,3){\line(3,-4){1.5}}
\multiput(11.5,1)(0,2){3}{\circle*{0.2}}
\put(10,1){\line(1,0){1.5}}
\put(10.7,1.2){\makebox(0,0){$1$}}
\put(10,3){\line(0,1){2}}
\put(10,3){\line(3,-4){1.5}}
\put(10.7,2.4){\makebox(0,0){$1$}}
\put(10,3){\line(1,0){1.5}}
\put(10.7,3.2){\makebox(0,0){$1$}}
\put(10,5){\line(1,0){1.5}}
\put(10.7,5.2){\makebox(0,0){$1$}}
\put(11.5,5){\line(0,-1){2}}
\put(10.2,4){\makebox(0,0){$2$}}
\put(11.7,4){\makebox(0,0){$0$}}
\end{picture}\caption{The weight function $w$. Note that $w$ is not defined on
edges which dominate $v_{1}$ and $v_{4}$.}%
\label{fw}%
\end{figure}
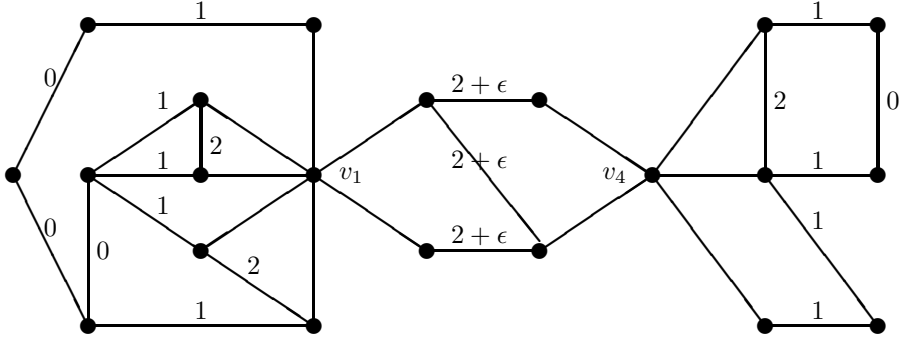

For every matching $M$ in $G^{\prime}$. There exist two integers, $0 \leq A
\leq|V|$ and $0 \leq B \leq|V|$, such that $w(M) = A + B\epsilon$, where $A$
is the number of vertices of $N(\{ v_{1}, v_{4} \})$ dominated by $M$, and $B$
is the number of edges of $M$ with one endpoint in $N(v_{1})$ and another
endpoint in $N(v_{4})$.

Let $M^{*}$ be a maximum weight matching in $G^{\prime}$. Let $0 \leq A
\leq|V|$ and $0 \leq B \leq|V|$ such that $w(M^{*}) = A + B \epsilon$. Among
all maximal matchings in $G^{\prime}$, the matching $M^{*}$ dominates maximum
possible number of vertices in $N(\{ v_{1}, v_{4} \})$. Among all maximal
matchings in $G^{\prime}$, which dominate $A$ vertices of $N(\{ v_{1}, v_{4}
\})$, the matching $M^{*}$ contains maximum number of edges with one endpoint
in $N(v_{1})$ and another endpoint in $N(v_{4})$.

Clearly, $A\leq|N(\{v_{1},v_{4}\})|$. If $A=|N(\{v_{1},v_{4}\})|$ and $B>0$
then $M^{\ast}$ dominates $N(\{v_{1},v_{4}\})$, and contains at least one edge
$v_{2}v_{3}$ where $v_{2}\in N(v_{1})$ and $v_{3}\in N(v_{4})$. Hence,
$M^{\ast}$ and
\[
M^{\ast\ast}=(M^{\ast}\cup\{v_{1}v_{2},v_{3}v_{4}\})\setminus\{v_{2}v_{3}\}
\]
are two maximal matchings in $G$, and $M^{\ast}\bigtriangleup M^{\ast\ast}=P$.

If $A=|N(\{v_{1},v_{4}\})|$ and $B=0$ then there exist matchings of
$G^{\prime}$ which dominate $N(\{v_{1},v_{4}\})$, but non of them contains an
edge $v_{2}v_{3}$ such that $v_{2}\in N(v_{1})$ and $v_{3}\in N(v_{4})$.
Therefore, there does not exist a path $(v_{1}v_{2},v_{2}v_{3},v_{3}v_{4})$,
which is the symmetric difference of two maximal matchings.

If $A<|N(\{v_{1},v_{4}\})|$ then there does not exist a matching of
$G^{\prime}$ which dominates $N(\{v_{1},v_{4}\})$, and therefore there does
not exist a path $(v_{1}v_{2},v_{2}v_{3},v_{3}v_{4})$ which is the symmetric
difference of two maximal matchings.

The following algorithm solves the problem.

\begin{enumerate}
\item Define $G^{\prime}$, $\epsilon$ and $w$ as above.

\item Invoke the algorithm of \cite{gabow:wmatching} to find a maximum weight
matching $M^{\prime}$ in $G^{\prime}$.

\item While $w(M^{\prime})>|N(\{v_{1},v_{4}\})|$

\begin{enumerate}
\item For every edge $v_{2}v_{3}\in M^{\prime}$ such that $w(v_{2}%
v_{3})=2+\epsilon$

\begin{enumerate}
\item List the path $( v_{1}, v_{2}, v_{3}, v_{4} )$.

\item Set $w(v_{2}v_{3}) = 2$.
\end{enumerate}

\item Invoke again the algorithm of \cite{gabow:wmatching} with the modified
definition of $w$, and get a new maximum weight matching $M^{\prime}$ in
$G^{\prime}$.
\end{enumerate}
\end{enumerate}

The complexity of the algorithm of \cite{gabow:wmatching} is $O\left(
\left\vert E\right\vert \cdot\left\vert V\right\vert +\left\vert V\right\vert
^{2}\log\left\vert V\right\vert \right)  $, and it is invoked at most $O(|V|)$
times. Hence, the total complexity of this algorithm is $O\left(  \left\vert
E\right\vert \cdot\left\vert V\right\vert ^{2}+\left\vert V\right\vert
^{3}\log\left\vert V\right\vert \right)  $. Note that if $v_{1}$ and $v_{4}$
are not the endpoints of a path of length $3$, which is the symmetric
difference of two maximal matchings, then the algorithm of
\cite{gabow:wmatching} is invoked only once. In this restricted case the
complexity of the algorithm is $O\left(  \left\vert E\right\vert
\cdot\left\vert V\right\vert +\left\vert V\right\vert ^{2}\log\left\vert
V\right\vert \right)  $.
\end{proof}

The next theorem is the main result of this section.

\begin{theorem}
\label{fastevs} The following problem can be solved in $O\left(  \left\vert
E\right\vert \cdot\left\vert V\right\vert ^{4}+\left\vert V\right\vert
^{5}\log\left\vert V\right\vert \right)  $ time:\newline Input: A graph
$G=(V,E)$.\newline Output: $EVS(G)$.
\end{theorem}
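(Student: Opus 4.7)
The plan is to apply Theorem \ref{whseq}, which reduces $w\in EVS(G)$ to the conjunction of linear equations $w(M_{1})=w(M_{2})$ coming from pairs of maximal matchings whose symmetric difference is one of three explicit small structures: a path of length two, a path of length three, or a cycle of length four. Hence $EVS(G)$ is cut out of $\mathbb{R}^{E}$ by the restrictions generated by these structures, and the algorithm's task is to enumerate the relevant subgraphs, test each (when needed) for actually being such a symmetric difference, and assemble the resulting linear system.

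First, I would loop over every length-two path $P=(v_{1}v_{2},v_{2}v_{3})$ of $G$; there are $O(|V|\cdot|E|)$ of these. For each, invoke Lemma \ref{path} and, if the answer is positive, list the restriction $w(v_{1}v_{2})=w(v_{2}v_{3})$. Each call costs $O(|E|\cdot|V|+|V|^{2}\log|V|)$, so this step runs in $O(|E|^{2}\cdot|V|^{2}+|E|\cdot|V|^{3}\log|V|)$ time, which is absorbed by the claimed bound since $|E|\leq|V|^{2}$.

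Second, for each ordered pair of non-adjacent vertices $(v_{1},v_{4})$ --- there are $O(|V|^{2})$ of them --- I would call Lemma \ref{allendpointsp4} once to obtain the complete list of length-three paths with endpoints $v_{1},v_{4}$ that arise as $M_{1}\bigtriangleup M_{2}$, and record the restriction $w(v_{1}v_{2})+w(v_{3}v_{4})=w(v_{2}v_{3})$ for each returned path. Each call costs $O(|E|\cdot|V|^{2}+|V|^{3}\log|V|)$, contributing $O(|E|\cdot|V|^{4}+|V|^{5}\log|V|)$ in total. Next, by Lemma \ref{cycle}, every $C_{4}$ of $G$ is automatically the symmetric difference of two maximal matchings, so I would enumerate all $C_{4}$'s in $O(|V|^{4})$ time and record the restriction $w(v_{1}v_{2})+w(v_{3}v_{4})=w(v_{2}v_{3})+w(v_{1}v_{4})$ for each. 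Finally, I would extract $EVS(G)$ from the accumulated equations by incremental Gaussian elimination, keeping only linearly independent rows; since the rank is at most $|E|$, this postprocessing fits inside the overall budget.

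The main obstacle is the enumeration of the length-three paths. Calling Lemma \ref{path} on each of the $\Theta(|V|^{2}\cdot|E|)$ candidate $P_{4}$'s individually would cost an extra factor of $|V|$ and overshoot the target, precisely the trap that the naive algorithm mentioned before the statement falls into. The savings come from Lemma \ref{allendpointsp4}, which bundles all $P_{4}$'s sharing a fixed pair of endpoints into one iterated maximum-weight-matching computation; this is exactly what brings the total down to the claimed $O(|E|\cdot|V|^{4}+|V|^{5}\log|V|)$ bound.
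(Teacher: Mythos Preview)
Your proposal is correct and follows essentially the same three-step algorithm as the paper: enumerate $P_{3}$'s and test each with Lemma \ref{path}, handle $P_{4}$'s in batches per endpoint pair via Lemma \ref{allendpointsp4}, and add all $C_{4}$ restrictions directly via Lemma \ref{cycle}. The only deviation is your final Gaussian-elimination step, which the paper omits (it simply outputs the list of restrictions as the description of $EVS(G)$); your assertion that this postprocessing ``fits inside the overall budget'' is not substantiated---there can be $\Theta(|V|^{4})$ restrictions over $|E|$ variables---but since it is unnecessary you can simply drop it.
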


\begin{proof}
The following algorithm solves the problem:

\begin{enumerate}
\item For each subgraph $H$ (not necessarily induced) isomorphic to $P_{3}$ on
vertex set $\{v_{1}, v_{2}, v_{3}\}$:

\begin{enumerate}
\item Invoke the algorithm described in the proof of Lemma \ref{path} to
decide whether $\left(  v_{1}v_{2},v_{2}v_{3}\right)  $ is the symmetric
difference of two maximal matchings.

\item If so, add the restriction: $w(v_{1}v_{2})=w(v_{2}v_{3})$.
\end{enumerate}

\item For each pair of non-adjacent vertices, $v_{1}$ and $v_{4}$:

\begin{enumerate}
\item Invoke the algorithm of Lemma \ref{allendpointsp4}.

\item For each path $(v_{1}v_{2},v_{2}v_{3},v_{3}v_{4})$ found by the algorithm:

\begin{enumerate}
\item Add the restriction: $w(v_{1}v_{2})+w(v_{3}v_{4})=w(v_{2}v_{3})$.
\end{enumerate}
\end{enumerate}

\item For each subgraph (not necessarily induced) isomorphic to $C_{4}$ on
vertex set $\{v_{1},v_{2},v_{3},v_{4}\}$:

\begin{enumerate}
\item Add the restriction: $w(v_{1}v_{2})+w(v_{3}v_{4})=w(v_{2}v_{3}%
)+w(v_{1}v_{4})$.
\end{enumerate}
\end{enumerate}

The complexity of the algorithm of Lemma \ref{path} is $O\left(  \left\vert
E\right\vert \cdot\left\vert V\right\vert +\left\vert V\right\vert ^{2}%
\log\left\vert V\right\vert \right)  $, and it is invoked $O(\left\vert
V\right\vert ^{3})$ times in step 1. Hence, the complexity of step 1 is
$O\left(  \left\vert E\right\vert \cdot\left\vert V\right\vert ^{4}+\left\vert
V\right\vert ^{5}\log\left\vert V\right\vert \right)  $. The complexity of the
algorithm of Lemma \ref{allendpointsp4} is $O\left(  \left\vert E\right\vert
\cdot\left\vert V\right\vert ^{2}+\left\vert V\right\vert ^{3}\log\left\vert
V\right\vert \right)  $, and it is invoked $O(|V|^{2})$ times in step 2.
Hence, the complexity of step 2 is $O\left(  \left\vert E\right\vert
\cdot\left\vert V\right\vert ^{4}+\left\vert V\right\vert ^{5}\log\left\vert
V\right\vert \right)  $. The complexity of step 3 is $O(|V|^{4})$. Thus the
total complexity of this algorithm is
\[
O\left(  \left\vert E\right\vert \cdot\left\vert V\right\vert ^{4}+\left\vert
V\right\vert ^{5}\log\left\vert V\right\vert \right)  .
\]

\end{proof}

\section{Conclusion and Future Work}

A graph $G$ is equimatchable if and only if $EVS(G)$ contains the function
$w\equiv1$. It follows from Theorem \ref{whseq} that $G$ is equimatchable if
and only if there do not exist two maximal matchings, $M_{1}$ and $M_{2}$,
such that $M_{1}\bigtriangleup M_{2}$ is a path of length $3$.

Hence, the following algorithm decides whether $G$ is equimatchable: For every
pair of non-adjacent vertices, $v_{1}$ and $v_{4}$, in $G$, invoke the
algorithm of Lemma \ref{allendpointsp4} with input $\left(  G,v_{1}%
,v_{4}\right)  $. Once the algorithm of Lemma \ref{allendpointsp4} yields a
non-empty list of paths, this algorithm outputs that $G$ is not equimatchable.
If all calls of the algorithm of Lemma \ref{allendpointsp4} yielded empty
lists of paths, then $G$ is equimatchable.

The algorithm of Lemma \ref{allendpointsp4} is called at most $O(|V|^{2})$
times. However, all of these calls, except maybe the last one, yielded empty
lists. The complexity invoking the algorithm of Lemma \ref{allendpointsp4} and
receiving an empty output is $O\left(  \left\vert E\right\vert \cdot\left\vert
V\right\vert +\left\vert V\right\vert ^{2}\log\left\vert V\right\vert \right)
$, while the complexity invoking the algorithm of Lemma \ref{allendpointsp4}
and receiving a non empty output is $O\left(  \left\vert E\right\vert
\cdot\left\vert V\right\vert ^{2}+\left\vert V\right\vert ^{3}\log\left\vert
V\right\vert \right)  $. Hence, the total complexity of this algorithm is
$O\left(  \left\vert E\right\vert \cdot\left\vert V\right\vert ^{3}+\left\vert
V\right\vert ^{4}\log\left\vert V\right\vert \right)  $ time.

However, for this restricted case a more efficient algorithm has been found in
\cite{de:efficienteq}. That algorithm decides whether an input graph is
equimatchable in $O(\left\vert E\right\vert \cdot\left\vert V\right\vert
^{2})$ time. It seems worth trying to improve on our algorithm returning
$EVS(G)$ using the technique presented in \cite{de:efficienteq}.

\end{document}